\documentclass[journal]{IEEEtran}

\usepackage{amsmath,amssymb,amsfonts}
\usepackage{newtxtext,newtxmath}
\usepackage{bm}
\usepackage{cite}
\usepackage{graphicx}
\usepackage[caption=false,font=footnotesize]{subfig}
\usepackage{stfloats}
\usepackage{xcolor}
\colorlet{red}{black}
\definecolor{revisionred}{rgb}{1,0,0}
\newcommand{\revision}[1]{#1}
\usepackage[normalem]{ulem}

\newcommand{\E}{\mathbb{E}}
\newcommand{\tr}{\operatorname{tr}}
\newcommand{\vecop}{\operatorname{vec}}
\newcommand{\CN}{\mathcal{CN}}
\newtheorem{proposition}{Proposition}

\begin{document}

\title{An Operational Leakage Metric for Eavesdropping in Joint Sensing and Communications Systems}

\author{Nanchi~Su,~\IEEEmembership{Member,~IEEE,} Xiaoye Jing~\IEEEmembership{Member,~IEEE,} Fan~Liu,~\IEEEmembership{Senior Member,~IEEE,} George~C.~Alexandropoulos,~\IEEEmembership{Senior Member,~IEEE,} Christos~Masouros,~\IEEEmembership{Fellow,~IEEE,} \\ and Qinyu~Zhang,~\IEEEmembership{Senior Member,~IEEE}
\thanks{N. Su and Q. Zhang are with the Guangdong Provincial Key Laboratory of Aerospace Communication and Networking Technology, Harbin Institute of Technology (Shenzhen), Shenzhen 518055, China. Their email addresses are sunanchi@hit.edu.cn and zqy@hit.edu.cn. Q. Zhang is also with Peng Cheng Laboratory, Shenzhen 518055, China.}
\thanks{X. Jing is with School of Electrical Engineering \& Intelligentization, Dongguan University of Technology, Dongguan 523808, China (e-mail: jingxiaoye@dgut.edu.cn). \textit{Corresponding author: Xiaoye Jing.}}
\thanks{F. Liu is with the National Mobile Communications Research Laboratory, School of Information Science and Engineering, Southeast University, Nanjing 210096, China (e-mail: fan.liu@seu.edu.cn and f.liu@ieee.org).}
\thanks{G. C. Alexandropoulos is with the Department of Informatics and Telecommunications, National and Kapodistrian University of Athens, Athens 16122, Greece (e-mail: alexandg@di.uoa.gr).}
\thanks{C. Masouros is with the Department of Electronic and Electrical Engineering, University College London, London WC1E 7JE, U.K. (e-mail: c.masouros@ucl.ac.uk).}}

\maketitle

\begin{abstract}
In upcoming integrated sensing and communications (ISAC) systems, security depends on whether an eavesdropper (Eve) can recover data symbols or infer sensing information from a common observation with a legitimate receiver. However, existing information and estimation measures do not quantify this dual risk under finite observation records. This letter introduces the novel metric of operational leakage, which signifies Eve’s optimal probability, in the Bayesian sense, of reconstructing communication symbols or estimating a sensing parameter within prescribed distortions. By showing that prior adversarial success probability separates the effect of side information from that of the observation, we propose a \textcolor{red}{conditional Laplace approximation for efficiently evaluating the proposed metric and jointly quantifying the eavesdropping risk to the sensing and communication functionalities}. Numerical results showcase that the risk enabled by adversarial observation increases with Eve's signal-to-noise ratio (SNR), conventional metrics do not uniquely determine finite-record adversarial success probability, and that the proposed conditional Gaussian approximation closely follows numerical Bayes evaluation.
\end{abstract}

\begin{IEEEkeywords}
Integrated sensing and communications, physical layer security, operational metric, Fisher information.
\end{IEEEkeywords}

\section{Introduction}

\IEEEPARstart{T}{he} concept of integrated sensing and communications (ISAC) uses a common waveform and receiver resources to support data delivery and environmental sensing. This coupling of the two operations also creates a joint security exposure: from the same received signal, an eavesdropper (Eve) may reconstruct the communication symbols as well as infer sensing parameters, e.g., a target direction. Accordingly, a secure ISAC design should assess Eve's capability on both tasks under an explicitly specified knowledge model \cite{su2026secureisac}.

Most existing physical layer security (PLS) measures quantify an information rate, an average information exposure, or an estimation bound \cite{wyner1975wiretap,yamamoto1997rate,su2022secure,wei2022multifunctional}. Recent ISAC studies also emphasize security and privacy threats created by the dual functions, including artificial noise (AN)-based protection and sensing-assisted Eve estimation \cite{su2026secureisac,su2024sensingassisted,gavras2025secureisac,bazzi2024securefd}. These measures remain indispensable for analysis and design, but do not directly answer the following important ISAC-oriented question: given a finite observation record, what is the probability that an adversary attains the reconstruction or sensing accuracy specified by the ISAC service? This distinction is particularly important when an attack succeeds as soon as Eve meets either of the prescribed distortion thresholds.

This letter addresses the latter question using a Bayesian decision formulation related to posterior vulnerability and generalized gain functions \cite{smith2009foundations,alvim2012gain}. In particular, we apply tools of leakage theory to ISAC by jointly treating symbol reconstruction and sensing estimation, while explicitly considering what Eve knows before receiving an observation from what its observation actually reveals. The main contributions of this letter are as follows:

\begin{itemize}
\item We define an operational metric as Eve's optimal Bayesian probability of meeting either a communication or sensing distortion threshold.

\item We introduce a baseline based on prior information and a normalized observation gain that separates existing side information from the observation risk.

\item We derive a conditional Laplace approximation based on the posterior Hessian and assess it against numerical Bayes results for feasible designs.
\end{itemize}

The remainder of this letter is organized as follows. Sections II and III describe the observation model and Eve's Bayesian decision rules. Sections IV and V include the proposed operational metric and present its conditional Gaussian evaluation. Section VI reports our numerical results, while Section VII concludes the letter.

\section{System and Adversary Observation Model}
We consider a full-duplex (FD) ISAC base station (BS)~\cite{alexandropoulos2022fdmimo} with $N_{\mathrm{BS},t}$ transmit antennas serving $K$ single-antenna downlink users over $L$ snapshots. \textcolor{red}{The BS receive chain simultaneously acquires monostatic sensing echoes \cite{smida2024fd}. These signals are not used below because the proposed leakage metric concerns Eve's observation.} At snapshot $\ell\in\{1,\dots,L\}$,this FD node transmits the following composite waveform:
\begin{equation}
\mathbf{x}(\ell)=\mathbf{F}\mathbf{s}(\ell)+\mathbf{z}(\ell),
\label{eq:x}
\end{equation}
where $\mathbf{s}(\ell)\in\mathbb{C}^{K\times 1}$ is the data symbol vector, $\mathbf{F}\in\mathbb{C}^{N_{\mathrm{BS},t}\times K}$ is the precoder, and $\mathbf{z}(\ell)\sim\CN(\mathbf{0},\mathbf{Z})$ is AN independent of $\mathbf{s}(\ell)$, with $\mathbf{Z}\succeq\mathbf{0}$ \cite{goel2008an}.
The total transmit power satisfies the condition: $\tr(\mathbf{F}\mathbf{F}^{H}+\mathbf{Z})\le P_{\max}$.

Eve, equipped with $N_E$ antennas, observes
\begin{equation}
\mathbf{y}_E(\ell)=\mathbf{H}_{\mathrm{eff}}(\boldsymbol{\phi},\ell)\,\mathbf{x}(\ell)+\mathbf{n}_E(\ell),
\label{eq:yE}
\end{equation}
where $\mathbf{n}_E(\ell)\sim\CN(\mathbf{0},\sigma_E^2\mathbf{I}_{N_E})$ is independent receiver noise and
$\boldsymbol{\phi}=[\theta_E,\theta_L]^T$ collects sensing-related parameters of interest.
The effective channel $\mathbf{H}_{\mathrm{eff}}(\boldsymbol{\phi},\ell)\in\mathbb{C}^{N_E\times N_{\mathrm{BS},t}}$ captures both the bistatic echo from the BS through the target to Eve and the direct path from the BS to Eve. We assume a narrowband, far-field model with one point reflector (the sensing target) and one reflection from it. The gain $\beta(\ell)$ includes the BS--target--Eve propagation as well as the target reflection, while other reflectors and multipath are not included. This gain is modeled as
\begin{equation}
\mathbf{H}_{\mathrm{eff}}(\boldsymbol{\phi},\ell)
=\beta(\ell)\,\mathbf{b}(\theta_E)\mathbf{a}_t(\theta_L)^{H}+\mathbf{H}_{\mathrm{CE}},
\label{eq:Heff}
\end{equation}
where $\mathbf{b}(\theta_E)$ and $\mathbf{a}_t(\theta_L)$ are receive/transmit steering vectors, and $\mathbf{H}_{\mathrm{CE}}$ is the direct BS-Eve channel matrix. Specifically, $\mathbf{b}(\theta_E)$ is Eve's receive-array steering vector for an echo arriving from angle $\theta_E$, and $\mathbf{a}_t(\theta_L)$ is the BS transmit-array steering vector for a signal sent toward the target at angle $\theta_L$.

For processing across snapshots, we use the stacked matrices $\mathbf{S}\in\mathbb{C}^{K\times L}$ and $\mathbf{Y}_E\in\mathbb{C}^{N_E\times L}$, whose column $\ell$ entries are $\mathbf{s}(\ell)$ and $\mathbf{y}_E(\ell)$, respectively. Unless otherwise stated, snapshots are independent across $\ell$. In addition, we henceforth assume that the symbol matrix $\mathbf{S}$ and sensing parameter $\boldsymbol{\phi}$ are assigned priors induced by Eve's side information. \textcolor{red}{We distinguish Eve's realized side information, denoted by $V_{\mathcal{K}_E}$, from design variables and fixed model hyperparameters, collected in $\boldsymbol{\omega}$. Both are known to Eve before it observes $\mathbf{Y}_E$.} We collect the unknown physical variables in $\boldsymbol{\nu}$. \textcolor{red}{Specifically, $\boldsymbol{\omega}$ is defined as follows:}
\begin{equation}
\boldsymbol{\omega}
\triangleq
\big(\mathbf{F},\mathbf{Z},\sigma_E^2,\boldsymbol{\omega}_{\rm aux}\big)
\label{eq:omega_tuple}
\end{equation}
\textcolor{red}{Here, $\boldsymbol{\omega}_{\rm aux}$ contains fixed auxiliary quantities, such as the array geometry, carrier wavelength, and channel model parameters. All remaining uncertain physical quantities are collected in $\boldsymbol{\nu}$.}

\section{Adversary Side Information and Bayesian Rules}
Given the previously presented observation model, the leakage metric quantifying Eve's adversary capability depends on the following specifications: first, what Eve knowns before observing $\mathbf{Y}_E$ and, second, what Bayesian decision rules Eve may use after observing $\mathbf{Y}_E$.

\subsection{Side Information and Latent Quantities}
We model Eve's knowledge by
\begin{equation}
\mathcal{K}_E \triangleq
\Big(\mathcal{K}_E^{\rm wf},\, \mathcal{K}_E^{\rm ch},\, \mathcal{K}_E^{\rm sync},\, \mathcal{K}_E^{\rm prior}\Big),
\label{eq:KE_tuple}
\end{equation}
where $\mathcal{K}_E^{\rm wf}$ represents waveform information, $\mathcal{K}_E^{\rm ch}$ is channel and array information, $\mathcal{K}_E^{\rm sync}$ is synchronization and calibration information, and $\mathcal{K}_E^{\rm prior}$ denotes any knowledge of prior distributions for the unknown symbol matrix $\mathbf{S}$, sensing parameter $\theta_L$, and physical variables $\boldsymbol{\nu}$. The tuple $\mathcal{K}_E$ specifies the categories of side information, while $V_{\mathcal{K}_E}$ denotes their realized values, such as known pilots, channel estimates, timing offsets, calibration data, and prior parameters available to Eve.

The public variable $\boldsymbol{\omega}$ should not be confused with Eve's realized side information. If $\mathbf{H}_{\mathrm{CE}}$, $\beta(\ell)$, $\theta_E$, synchronization offsets, or calibration parameters are not included in $V_{\mathcal{K}_E}$, they enter $\boldsymbol{\nu}$. Parameters known to Eve are removed from $\boldsymbol{\nu}$. If $\mathbf{F}$, $\mathbf{Z}$, or $\sigma_E^2$ is not public, its realization must also be placed in $\boldsymbol{\nu}$ and only its model hyperparameters remain in $\boldsymbol{\omega}$.

For fixed public variables $\boldsymbol{\omega}$, Eve's Bayesian model uses the joint prior $p_{\mathcal{K}_E}(\mathbf{S},\theta_L,\boldsymbol{\nu}\mid V_{\mathcal{K}_E},\boldsymbol{\omega})$ for the unknown symbols, sensing parameter, and physical variables. \textcolor{red}{Starting from \eqref{eq:yE}, the covariance matrix of Eve's observation after marginalizing the AN is computed as follows:}
\begin{equation}
\begin{aligned}
\mathbf{C}_E(\ell)
&\triangleq
\sigma_E^2\mathbf{I}_{N_E}
+\mathbf{H}_{\rm eff}(\theta_L,\boldsymbol{\nu},\ell)
\mathbf{Z}\mathbf{H}_{\rm eff}^{H}(\theta_L,\boldsymbol{\nu},\ell),
\end{aligned}
\label{eq:CE_cov}
\end{equation}
yielding the following expression for the conditional likelihood marginalized over AN resulting from $L$ independent snapshots:
\begin{equation}
\begin{aligned}
&p_{\boldsymbol{\omega}}\big(\mathbf{Y}_E \mid \mathbf{S},\theta_L,\boldsymbol{\nu},V_{\mathcal{K}_E}\big) \\
&\;\;\;\;\;\;\;\;\;=\prod_{\ell=1}^{L} \CN\!\Big(\mathbf{y}_E(\ell)\,\big|\,
\mathbf{H}_{\rm eff}(\theta_L,\boldsymbol{\nu},\ell)\mathbf{F}\mathbf{s}(\ell), \mathbf{C}_E(\ell)\Big).
\end{aligned}
\label{eq:likelihood_condS}
\end{equation}
Together with the prior on $\mathbf{S}$, $\theta_L$, and $\boldsymbol{\nu}$, \eqref{eq:likelihood_condS} defines Eve's posterior distribution and, hence, the observation-assisted adversarial success probability.
Equation \eqref{eq:likelihood_condS} displays the case where $\mathbf{F}$, $\mathbf{Z}$, and $\sigma_E^2$ are available to Eve. If any of them is unknown, the corresponding realized quantity is included in $\boldsymbol{\nu}$, and the same likelihood is interpreted with that component supplied by $\boldsymbol{\nu}$.

\subsection{Bayesian Decision Rules}
\textcolor{red}{For the exact leakage metric, Eve may use any measurable Bayesian rule based only on its observation, side information, model information known to Eve, and private randomization. We define the following rule class:}
\begin{equation}
\begin{aligned}
\mathcal{A}_{\rm Bayes}(\mathcal{K}_E)\triangleq
\Big\{\, g\ \Big|\ &
g \text{ is measurable with input }
(\mathbf{Y}_E,V_{\mathcal{K}_E},\\
&\boldsymbol{\omega},U),
 U\mathrel{\perp\!\!\!\perp}(\mathbf{Y}_E,\mathbf{S},\theta_L,\boldsymbol{\nu})
\Big\}.
\end{aligned}
\label{eq:capability_class}
\end{equation}
For $g\in\mathcal{A}_{\rm Bayes}(\mathcal{K}_E)$, we have $g(\mathbf{Y}_E,V_{\mathcal{K}_E},\boldsymbol{\omega},U)=(\widehat{\mathbf{S}},\widehat{\theta}_L)$, where $\mathrel{\perp\!\!\!\perp}$ denotes statistical independence and $U$ is Eve's independent private randomization. The class imposes no constraint on computational complexity or estimator architecture, so restricted neural estimators, decoder families, or iterative algorithms define separate constrained leakage metrics.

\section{Operational Joint Leakage Metric}
Eve has two reconstruction objectives: the communication symbols $\mathbf{S}$ and the sensing parameter $\theta_L$. Let $\mathbf{s}=\vecop(\mathbf{S})$ and $n_C=KL$. \revision{Under the Gaussian signaling assumption, we measure symbol reconstruction accuracy by the squared error per complex symbol:}
\begin{equation}
d_c(\widehat{\mathbf{S}},\mathbf{S})
\triangleq
\frac{1}{n_C}\|\widehat{\mathbf{s}}-\mathbf{s}\|_2^2,
\label{eq:dc}
\end{equation}
\revision{This quantifies the reconstruction error in $\mathbf{S}$ per complex symbol. For finite-alphabet data detection, a suitable alternative is the normalized Hamming distortion $d_c^{\rm H}(\widehat{\mathbf{S}},\mathbf{S})=n_C^{-1}\sum_{i=1}^{n_C}\mathbf{1}\{\widehat{s}_i\ne s_i\}$, with estimates restricted to the symbol alphabet. Its zero-threshold success event is exact recovery of the symbol block. The Bayesian definitions below also apply to this discrete loss. The Gaussian evaluation and numerical results here use \eqref{eq:dc}.} For sensing, let $d_s(\widehat{\theta}_L,\theta_L)$ be an error measure chosen for the sensing task. For a target direction, it can be the smallest difference between the estimated and true angles, treating angles separated by $2\pi$ as the same: $\min_{k\in\mathbb{Z}}|\widehat{\theta}_L-\theta_L+2\pi k|$. Given $\delta_c\ge0$ and $\delta_L>0$, we define the sets:
\begin{align}
\mathcal{E}_{\mathrm{c}}(\delta_c)
&\triangleq \{d_c(\widehat{\mathbf{S}},\mathbf{S})\le\delta_c\},
\label{eq:Ec}\\
\mathcal{E}_{\mathrm{s}}(\delta_L)
&\triangleq \{d_s(\widehat{\theta}_L,\theta_L)\le\delta_L\}.
\label{eq:Es}
\end{align}
An attack is successful when Eve satisfies either distortion threshold, which defines the following success event (logical OR):
\begin{equation}
\mathcal{E}_{\rm OR}(\delta_c,\delta_L)
\triangleq
\mathcal{E}_{\mathrm{c}}(\delta_c)\cup\mathcal{E}_{\mathrm{s}}(\delta_L),
\label{eq:Ecomp_OR}
\end{equation}
\textcolor{red}{Note that Eve may use separate symbol and sensing estimates, i.e., the estimates need not come from one joint estimate. In this letter, we focus on success in either task. One may instead require success in both tasks or assign weights to the two tasks, this is left for future work.}

\textcolor{red}{For any probability law $\pi$ of the parameter pair $(\mathbf{S},\theta_L)$, such as Eve's prior $P_{\mathbf{S},\theta_L\mid V_{\mathcal{K}_E},\boldsymbol{\omega}}$ or posterior $P_{\mathbf{S},\theta_L\mid \mathbf{Y}_E,V_{\mathcal{K}_E},\boldsymbol{\omega}}$, define the OR vulnerability as follows:}
\begin{equation}
\begin{aligned}
\mathcal{V}_{\rm OR}(\pi)
\triangleq
\sup_{\widehat{\mathbf{S}},\widehat{\theta}_L}
\pi\big(
&d_c(\widehat{\mathbf{S}},\mathbf{S})\le\delta_c\text{ OR }
d_s(\widehat{\theta}_L,\theta_L)\le\delta_L
\big).
\end{aligned}
\label{eq:vulnerability_OR}
\end{equation}
\textcolor{red}{Let $\pi_{\mathrm{c}}$ and $\pi_{\mathrm{s}}$ be the communication and sensing marginals, and let $\mathcal{V}_{\mathrm{c}}(\pi_{\mathrm{c}})$ and $\mathcal{V}_{\mathrm{s}}(\pi_{\mathrm{s}})$ represent the respective suprema over the single-task events.}

\begin{proposition}
For any law $\pi$ of $(\mathbf{S},\theta_L)$,
\begin{equation}
\begin{aligned}
\max\{\mathcal{V}_{\mathrm{c}}(\pi_{\mathrm{c}}),\mathcal{V}_{\mathrm{s}}(\pi_{\mathrm{s}})\}
&\le \mathcal{V}_{\rm OR}(\pi)\\
&\le \min\{1,\mathcal{V}_{\mathrm{c}}(\pi_{\mathrm{c}})+\mathcal{V}_{\mathrm{s}}(\pi_{\mathrm{s}})\}.
\end{aligned}
\label{eq:or_bounds}
\end{equation}
If $\pi=\pi_{\mathrm{c}}\pi_{\mathrm{s}}$, then
\begin{equation}
\mathcal{V}_{\rm OR}(\pi)
=
1-\big(1-\mathcal{V}_{\mathrm{c}}(\pi_{\mathrm{c}})\big)
\big(1-\mathcal{V}_{\mathrm{s}}(\pi_{\mathrm{s}})\big).
\label{eq:or_independent}
\end{equation}
Moreover, $\mathcal{V}_{\rm OR}$ is convex in $\pi$.
\end{proposition}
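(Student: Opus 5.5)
Fix any estimate pair $(\widehat{\mathbf{S}},\widehat{\theta}_L)$; here the estimates are deterministic points, since under a fixed law $\pi$ the private randomization $U$ cannot raise the supremum. Write $A=\{d_c(\widehat{\mathbf{S}},\mathbf{S})\le\delta_c\}$ and $B=\{d_s(\widehat{\theta}_L,\theta_L)\le\delta_L\}$, viewed as events in the parameter space of $(\mathbf{S},\theta_L)$. The event $A$ depends only on $\mathbf{S}$, so $\pi(A)=\pi_{\mathrm{c}}(A)\le\mathcal{V}_{\mathrm{c}}(\pi_{\mathrm{c}})$. Likewise $B$ depends only on $\theta_L$, so $\pi(B)=\pi_{\mathrm{s}}(B)\le\mathcal{V}_{\mathrm{s}}(\pi_{\mathrm{s}})$.

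For the upper bound in \eqref{eq:or_bounds}, apply the union bound $\pi(A\cup B)\le\pi(A)+\pi(B)\le\mathcal{V}_{\mathrm{c}}+\mathcal{V}_{\mathrm{s}}$. Together with $\pi(A\cup B)\le1$, taking the supremum over $(\widehat{\mathbf{S}},\widehat{\theta}_L)$ gives the bound. For the lower bound, note $\pi(A\cup B)\ge\pi(A)$ for every $\widehat{\theta}_L$. Taking the supremum over $\widehat{\mathbf{S}}$ with any fixed $\widehat{\theta}_L$ gives $\mathcal{V}_{\rm OR}\ge\mathcal{V}_{\mathrm{c}}$, and the symmetric argument gives $\mathcal{V}_{\rm OR}\ge\mathcal{V}_{\mathrm{s}}$.

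For the product case $\pi=\pi_{\mathrm{c}}\pi_{\mathrm{s}}$, the events $A$ and $B$ are independent under $\pi$. Hence
\[
\pi(A\cup B)=1-\bigl(1-\pi_{\mathrm{c}}(A)\bigr)\bigl(1-\pi_{\mathrm{s}}(B)\bigr).
\]
The right-hand side is nondecreasing in each factor separately, and the two factors are controlled by separate variables: $\pi_{\mathrm{c}}(A)$ by $\widehat{\mathbf{S}}$ and $\pi_{\mathrm{s}}(B)$ by $\widehat{\theta}_L$. The joint supremum therefore splits into two independent suprema. Take sequences $\widehat{\mathbf{S}}_n$ and $\widehat{\theta}_{L,n}$ whose success probabilities approach $\mathcal{V}_{\mathrm{c}}$ and $\mathcal{V}_{\mathrm{s}}$. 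Continuity of the map $(x,y)\mapsto1-(1-x)(1-y)$ then gives \eqref{eq:or_independent} as a limit, and the reverse inequality follows from monotonicity. This bookkeeping, handling suprema that may not be attained, is the only delicate point, and a limiting argument of this kind suffices.

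Convexity follows because $\mathcal{V}_{\rm OR}$ is a pointwise supremum of maps that are linear in $\pi$. For fixed estimates, $\pi\mapsto\pi(A\cup B)$ is linear under mixtures: for $\pi=\lambda\pi_1+(1-\lambda)\pi_2$,
\[
\pi(A\cup B)=\lambda\pi_1(A\cup B)+(1-\lambda)\pi_2(A\cup B)\le\lambda\mathcal{V}_{\rm OR}(\pi_1)+(1-\lambda)\mathcal{V}_{\rm OR}(\pi_2).
\]
Taking the supremum over the estimates on the left yields convexity. Measurability of $A$ and $B$ holds because $d_c$ and $d_s$ are continuous, so the sublevel sets are closed.
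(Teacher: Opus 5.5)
Your proof is correct and follows the same route as the paper's. Both fix the estimates, sandwich $\pi(A\cup B)$ between $\max\{\pi(A),\pi(B)\}$ and $\min\{1,\pi(A)+\pi(B)\}$, factor $1-(1-\pi_{\mathrm{c}}(A))(1-\pi_{\mathrm{s}}(B))$ under a product law and use its monotonicity, and obtain convexity as a supremum of maps linear in $\pi$. Your approximating-sequence argument for possibly unattained suprema, and your use of ``nondecreasing'' in place of ``increasing'', is somewhat more careful than the paper's one-line remark that separate maximization also maximizes the OR probability.
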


\begin{IEEEproof}
For any fixed pair of estimates, let $A$ and $B$ denote the events that the symbol and sensing estimates meet their respective distortion thresholds. The OR success event is $A\cup B$. Hence, $\max\{\Pr(A),\Pr(B)\}\leq\Pr(A\cup B)\leq\min\{1,\Pr(A)+\Pr(B)\}$. If $\pi=\pi_{\mathrm c}\pi_{\mathrm s}$, $\mathbf{S}$ and $\theta_L$ are independent, and $\Pr(A\cup B)=1-[1-\Pr(A)][1-\Pr(B)]$. This expression is increasing in both single-task success probabilities. Therefore, separately maximizing the communication and sensing success probabilities also maximizes the OR success probability. Finally, for fixed estimates, $\Pr(A\cup B)$ is linear in $\pi$. Taking the supremum over all estimates thus yields a convex function of $\pi$.
\end{IEEEproof}

\subsection{Prior Baseline and Observation Assisted Risk}
For compact notation, let $\zeta\triangleq(\delta_c,\delta_L,\mathcal{K}_E,\boldsymbol{\omega})$. The prior adversarial success probability is defined as follows:
\begin{equation}
P_{0}^{\star}(\zeta)
\triangleq
\E_{V_{\mathcal{K}_E}}\!\left[
\mathcal{V}_{\rm OR}\!\left(
P_{\mathbf{S},\theta_L\mid V_{\mathcal{K}_E},\boldsymbol{\omega}}
\right)
\right],
\label{eq:P0_star}
\end{equation}
where the adversary chooses estimates using its side information but not the observation $\mathbf{Y}_E$. The observation-assisted adversarial success probability is
\begin{equation}
P_{Y}^{\star}(\zeta)
\triangleq
\E_{\mathbf{Y}_E,V_{\mathcal{K}_E}}\!\left[
\mathcal{V}_{\rm OR}\!\left(
P_{\mathbf{S},\theta_L\mid \mathbf{Y}_E,V_{\mathcal{K}_E},\boldsymbol{\omega}}
\right)
\right],
\label{eq:posterior_characterization}
\end{equation}
In both expressions, $\mathcal{V}_{\rm OR}$ uses the distribution of $(\mathbf{S},\theta_L)$ after averaging over $\boldsymbol{\nu}$. Equation \eqref{eq:posterior_characterization} gives the highest adversarial success probability over $g\in\mathcal{A}_{\rm Bayes}(\mathcal{K}_E)$ because Eve chooses the estimates that maximize the conditional success probability for each observation. If no rule reaches this highest value exactly, rules can get arbitrarily close to it. These definitions give $P_Y^\star\ge P_0^\star$. More generally, giving Eve a less informative observation cannot increase its adversarial success probability.

The observation-induced leakage provides the increase in adversarial success probability beyond the prior baseline. We use the following normalized form
\begin{equation}
L_{\rm op}^{\rm norm}(\zeta)
\triangleq
\frac{P_Y^{\star}(\zeta)-P_0^{\star}(\zeta)}
{1-P_0^{\star}(\zeta)}
\label{eq:Lop_norm}
\end{equation}
when $P_0^{\star}<1$. It is the observation-induced increase divided by the remaining chance that Eve fails using side information alone, $1-P_0^\star$. Thus, a value of zero implies that the observation provides no additional advantage beyond Eve's side information. If $P_0^{\star}=1$, the normalized value is undefined and one should report $P_0^\star$. Results should report $(P_0^\star,P_Y^\star,L_{\rm op}^{\rm norm})$ together. \textcolor{red}{Both $P_0^{\star}$ and $P_Y^{\star}$ are nondecreasing in each distortion threshold: increasing $\delta_c$ or $\delta_L$ enlarges the corresponding success event and, therefore, cannot decrease either success probability.}

\textcolor{red}{The operational leakage metric can be evaluated under any observation model $p_{\boldsymbol{\omega}}(\mathbf{Y}_E\mid \mathbf{S},\theta_L,\boldsymbol{\nu},V_{\mathcal{K}_E})$. For example, \eqref{eq:Heff} can be replaced by a model incorporating multipath, multiple targets, or wideband signaling without changing the Bayesian definition.}

\section{Conditional Gaussian Evaluation}
The exact posterior characterization in expression in \eqref{eq:posterior_characterization} can hardly have a closed form. This section gives a conditional Laplace approximation that mirrors its decision structure: optimize the posterior success probability for each observation, then average over observations. \revision{This approximation uses the squared-error criterion in \eqref{eq:dc} for the communication task under Gaussian signaling. Finite-alphabet detection instead requires posterior sums over discrete symbol hypotheses, possibly combined with integration over continuous sensing and nuisance parameters. The Gaussian formulas below do not directly evaluate symbol or coded-message decoding success.}

\subsection{Conditional Laplace Posterior}
Let $\boldsymbol{\nu}$ collect the nuisance parameters unknown to Eve. Since $\mathbf{S}$ is complex, we define the following vector:
\begin{equation}
\mathbf{s}_{\rm R}
\triangleq
\big[\Re\{\vecop(\mathbf{S})\}^{T},\
\Im\{\vecop(\mathbf{S})\}^{T}\big]^{T}
\in\mathbb{R}^{2n_C},
\label{eq:s_real}
\end{equation}
and organize the local parameter and target vectors as
\begin{equation}
\boldsymbol{\xi}
\triangleq
\big[\boldsymbol{\nu}^{T},\theta_L,\mathbf{s}_{\rm R}^{T}\big]^{T},
\qquad
\boldsymbol{\tau}
\triangleq
\big[\theta_L,\mathbf{s}_{\rm R}^{T}\big]^{T}.
\label{eq:xi_tau}
\end{equation}
For each realized $(\mathbf{Y}_E,V_{\mathcal{K}_E})$, define
\begin{equation}
\ell_{\rm post}(\boldsymbol{\xi})
\triangleq
\log p_{\boldsymbol{\omega}}(\mathbf{Y}_E|\boldsymbol{\xi},V_{\mathcal{K}_E})
+\log p_{\mathcal{K}_E}(\boldsymbol{\xi}|V_{\mathcal{K}_E},\boldsymbol{\omega}).
\label{eq:ell_post}
\end{equation}
The maximum a posteriori (MAP) point and observed posterior Hessian are respectively defined as
\begin{equation}
\widehat{\boldsymbol{\xi}}_{\rm MAP}
\triangleq
\arg\max_{\boldsymbol{\xi}}\ell_{\rm post}(\boldsymbol{\xi}),
\label{eq:map}
\end{equation}
and
\begin{equation}
\mathbf{H}_{\rm post}(\mathbf{Y}_E,V_{\mathcal{K}_E})
\triangleq
-\nabla_{\boldsymbol{\xi}}^{2}
\ell_{\rm post}(\boldsymbol{\xi})
\big|_{\boldsymbol{\xi}=\widehat{\boldsymbol{\xi}}_{\rm MAP}} .
\label{eq:Hpost}
\end{equation}

% Removed the manual column break so revised text flows without a blank column.
Fisher information and the Cram\`er--Rao bound (CRB) are often used as observation-independent ISAC design criteria \cite{su2025bcrb}, but the conditional approximation here uses \eqref{eq:Hpost}, consistent with Laplace posterior approximation \cite{tierney1986laplace}. We define the partition
\begin{equation}
\mathbf{H}_{\rm post}
=
\begin{bmatrix}
\mathbf{H}_{\nu\nu} & \mathbf{H}_{\nu\tau}\\
\mathbf{H}_{\tau\nu} & \mathbf{H}_{\tau\tau}
\end{bmatrix}.
\label{eq:Hpost_partition}
\end{equation}
When $\mathbf{H}_{\nu\nu}\succ0$ and the Schur complement is positive definite,
\begin{equation}
\mathbf{H}_{\tau\mid\nu}
\triangleq
\mathbf{H}_{\tau\tau}
-\mathbf{H}_{\tau\nu}\mathbf{H}_{\nu\nu}^{-1}\mathbf{H}_{\nu\tau},
\qquad
\boldsymbol{\Sigma}_{\tau}(\mathbf{Y}_E,V_{\mathcal{K}_E})
\approx
\mathbf{H}_{\tau\mid\nu}^{-1}.
\label{eq:H_tau_given_nu}
\end{equation}
Numerical implementations may replace $\mathbf{H}_{\nu\nu}$ by $\mathbf{H}_{\nu\nu}+\epsilon\mathbf{I}$ when ill conditioned, but singular or multimodal cases are not rigorously covered by this local approximation. \textcolor{red}{The conditional Gaussian posterior is given as follows \cite{tierney1986laplace}:}
\begin{equation}
q(\boldsymbol{\tau}|\mathbf{Y}_E,V_{\mathcal{K}_E},\boldsymbol{\omega})
=
\mathcal{N}\big(
\mathbf{m}_{\tau}(\mathbf{Y}_E,V_{\mathcal{K}_E}),
\boldsymbol{\Sigma}_{\tau}(\mathbf{Y}_E,V_{\mathcal{K}_E})
\big),
\label{eq:q_tau}
\end{equation}
where $\mathbf{m}_{\tau}$ is the target subvector of $\widehat{\boldsymbol{\xi}}_{\rm MAP}$. By removing all unknown physical variables $\boldsymbol{\nu}$ at once, this calculation keeps the posterior covariance between the sensing variable $\theta_L$ and the real and imaginary components of the symbols, collected in $\mathbf{s}_{\rm R}$.

\subsection{Conditional OR Success Probability}
For fixed $(\mathbf{Y}_E,V_{\mathcal{K}_E},\boldsymbol{\omega})$, define
\begin{equation}
\begin{aligned}
\widetilde v(\mathbf{Y}_E,V_{\mathcal{K}_E},\boldsymbol{\omega})
\triangleq
\sup_{\widehat{\theta}_L,\widehat{\mathbf{s}}_{\rm R}}
\Pr_q\big(
&|\theta_L-\widehat{\theta}_L|\le\delta_L\\
&\text{or }
\|\mathbf{s}_{\rm R}-\widehat{\mathbf{s}}_{\rm R}\|_2^2
\le n_C\delta_c
\big).
\end{aligned}
\label{eq:conditional_vulnerability}
\end{equation}
\textcolor{red}{where $\Pr_q(\cdot)$ denotes probability with respect to the Gaussian posterior in \eqref{eq:q_tau}.} For any distribution $r$ of $\boldsymbol{\tau}$, $\widetilde{\mathcal V}_{\rm OR}(r)$ is the highest OR success probability calculated from $r$. Thus, $\widetilde v(\mathbf{Y}_E,V_{\mathcal{K}_E},\boldsymbol{\omega})=\widetilde{\mathcal V}_{\rm OR}\big(q(\boldsymbol{\tau}\mid\mathbf{Y}_E,V_{\mathcal{K}_E},\boldsymbol{\omega})\big)$.

The angular component in \eqref{eq:conditional_vulnerability} is treated in a local Euclidean coordinate. Thus, the Laplace approximation is appropriate when the posterior is sufficiently concentrated so that the periodicity of the angle is negligible \cite{tierney1986laplace}. For each observation, Eve chooses its symbol and sensing estimates to maximize the probability of success in either task under the Gaussian posterior. It can be approximated by sample average optimization over posterior samples \cite{shapiro2003montecarlo}. \textcolor{red}{Using the posterior mean or MAP estimates yields a lower complexity approximation $\widetilde P_Y^{\rm plug}$, but these estimates need not maximize the OR success probability under the Gaussian approximation.} The observation-averaged approximation is then defined as:
\begin{equation}
\widetilde{P}_{Y}(\zeta)
\triangleq
\E_{\mathbf{Y}_E,V_{\mathcal{K}_E}}
\big[\widetilde v(\mathbf{Y}_E,V_{\mathcal{K}_E},\boldsymbol{\omega})\big],
\label{eq:PY_tilde_avg}
\end{equation}

\begin{figure}[!b]
    \centering
    \subfloat[\label{fig:snr_compromise}]{\includegraphics[width=0.485\columnwidth]{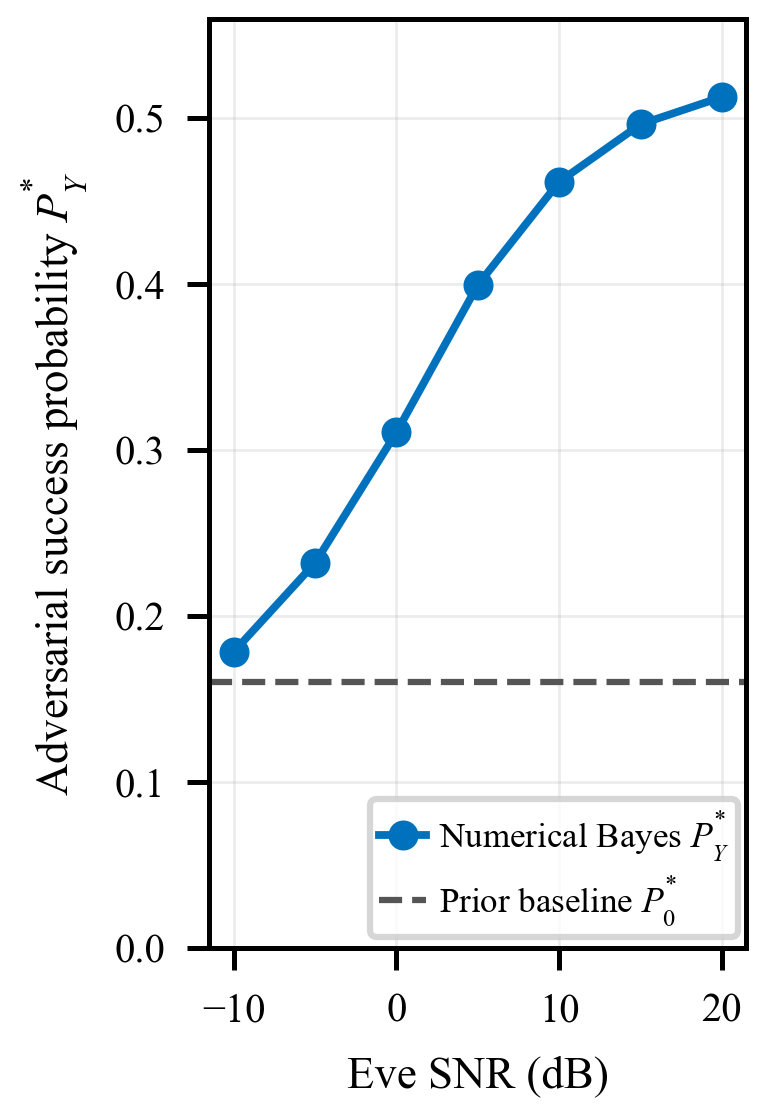}}
    \hfil
    \subfloat[\label{fig:snr_normalized}]{\includegraphics[width=0.485\columnwidth]{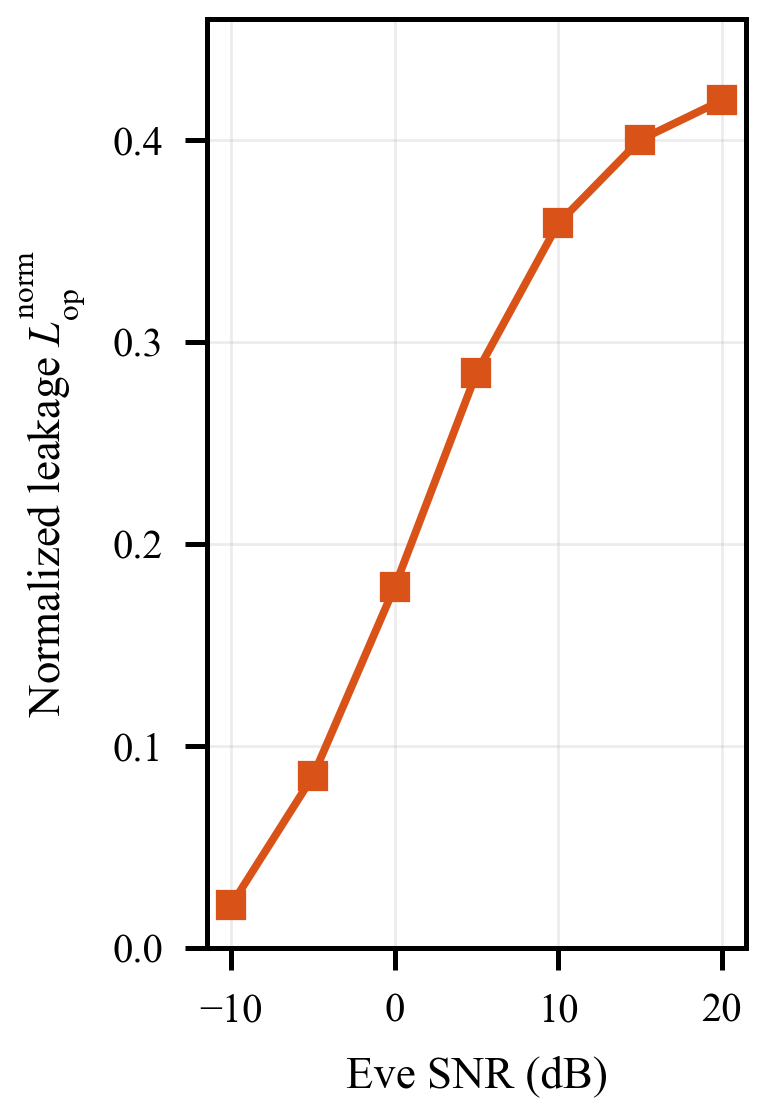}}
    \caption{Observation-assisted adversarial success probability versus Eve SNR at the nominal distortion thresholds. (a) Bayes adversarial success probability $P_Y^\star$ and prior baseline $P_0^\star$. (b) Corresponding normalized observation-induced leakage $L_{\mathrm{op}}^{\mathrm{norm}}$.}
    \label{fig:snr_leakage}
\end{figure}

\begin{figure*}[!t]
    \centering
    \subfloat[\label{fig:secrecy_vs_operational}]{\includegraphics[width=0.32\textwidth]{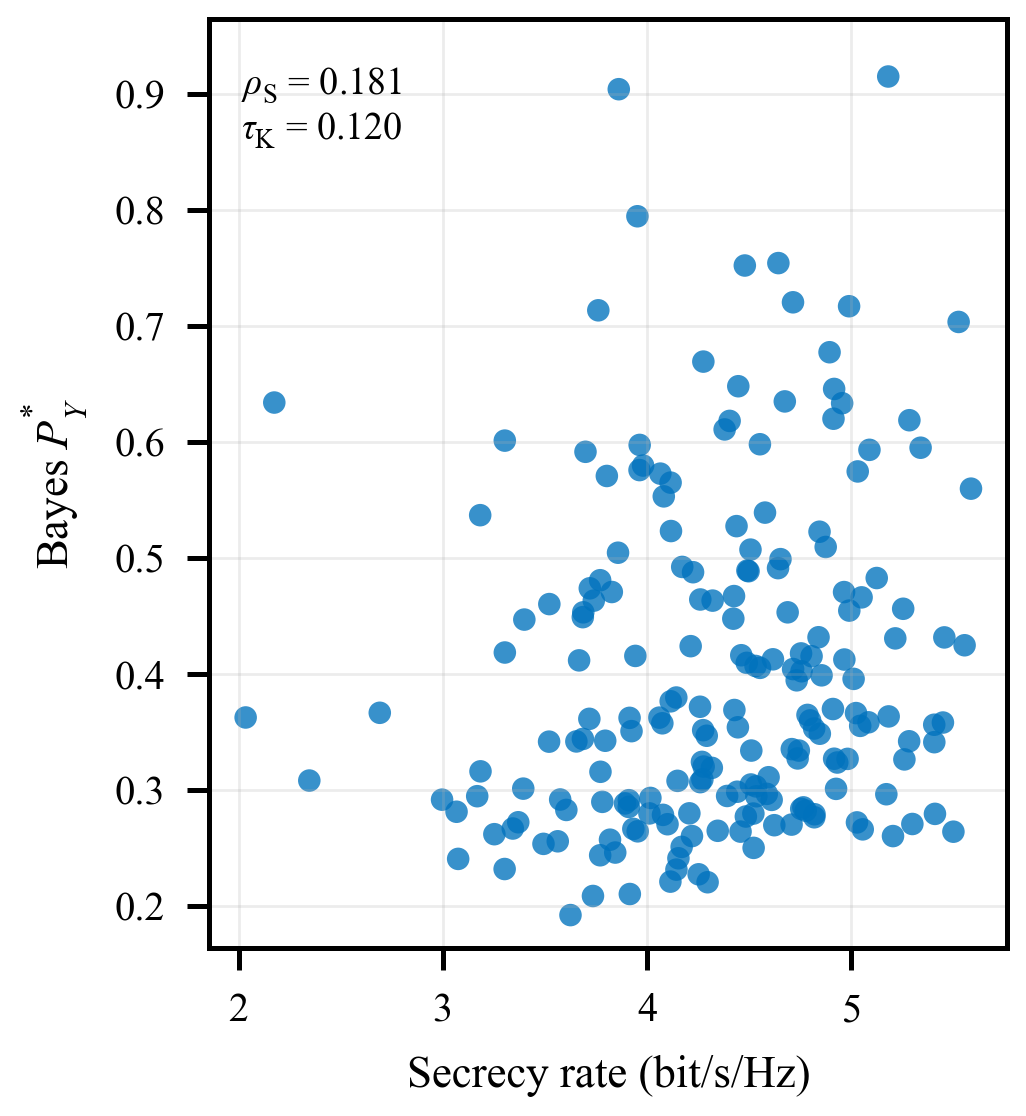}}
    \hfil
    \subfloat[\label{fig:crb_vs_operational}]{\includegraphics[width=0.32\textwidth]{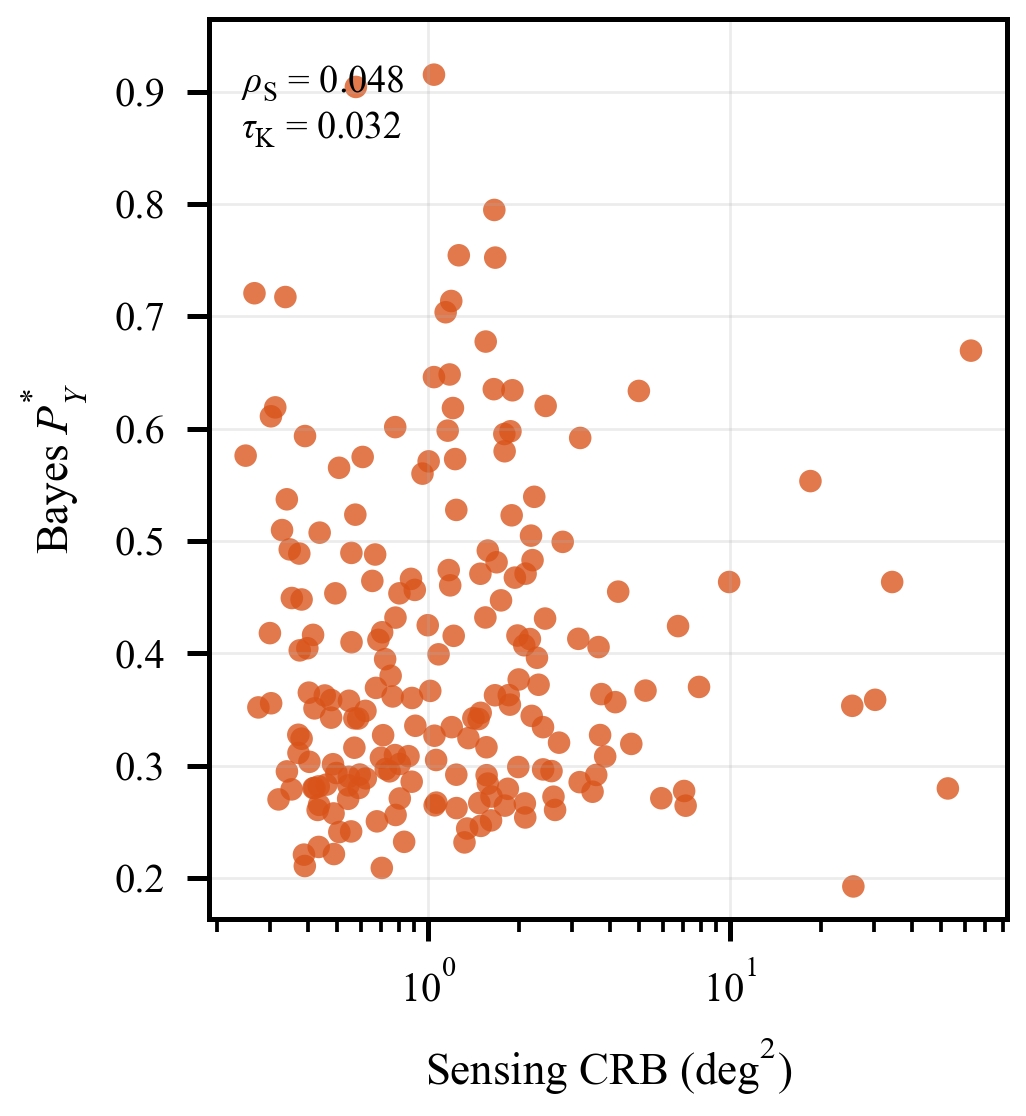}}
    \hfil
    \subfloat[\label{fig:proxy_vs_operational}]{\includegraphics[width=0.32\textwidth]{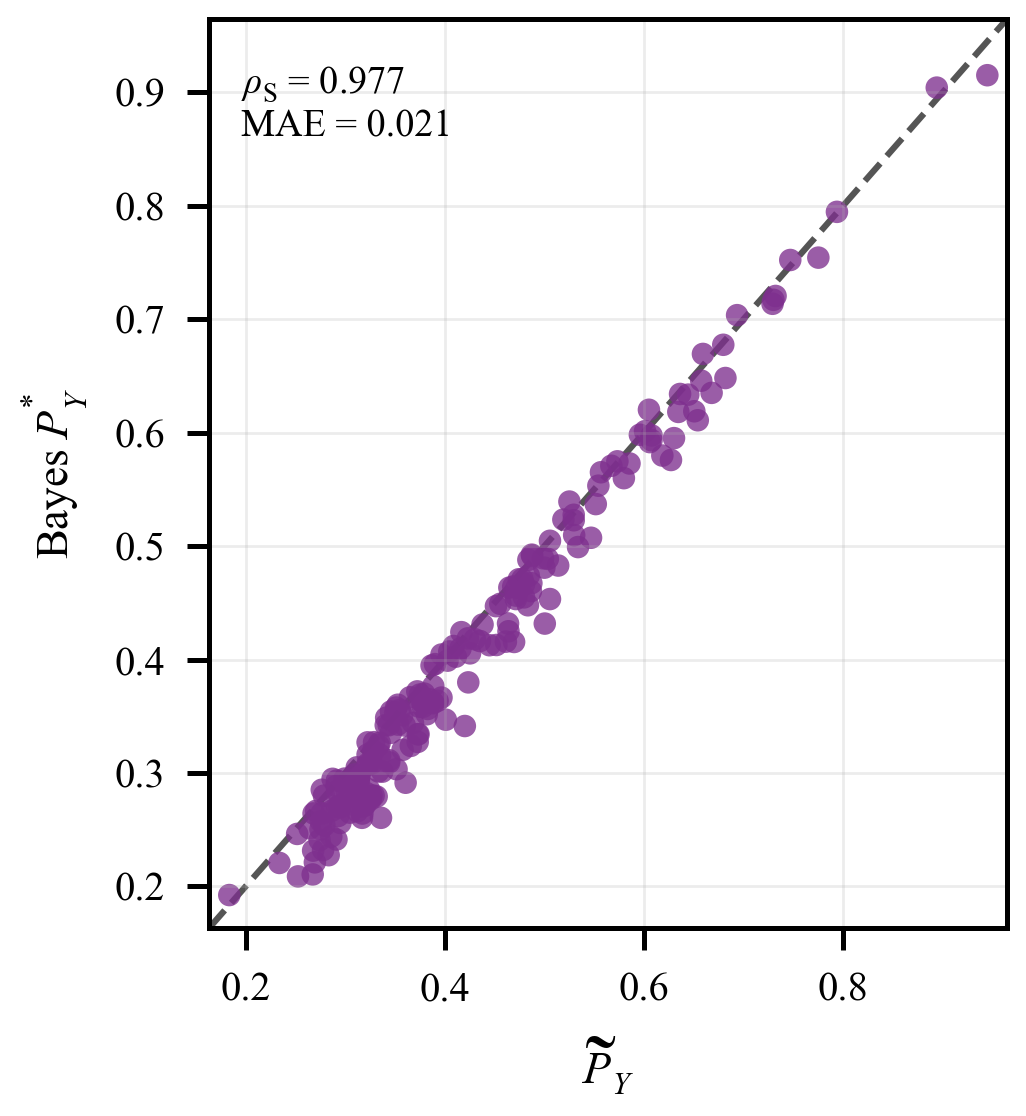}}
    \caption{Comparison between conventional performance surrogates and the proposed adversarial success metric over feasible candidate designs. (a) Secrecy rate versus numerically evaluated Bayes adversarial success probability $P_Y^\star$. (b) Sensing CRB versus $P_Y^\star$. (c) Conditional Gaussian approximation $\widetilde P_Y$ versus Bayes evaluation, where the dashed line denotes $y=x$. Panels (a)--(b) report rank correlations $(\rho_{\mathrm S},\tau_{\mathrm K})$, whereas panel (c) reports $\rho_{\mathrm S}$ and MAE.}
    \label{fig:metric_validation}
\end{figure*}

\begin{figure}[!b]
    \centering
    \subfloat[\label{fig:threshold_comm}]{\includegraphics[width=0.485\columnwidth]{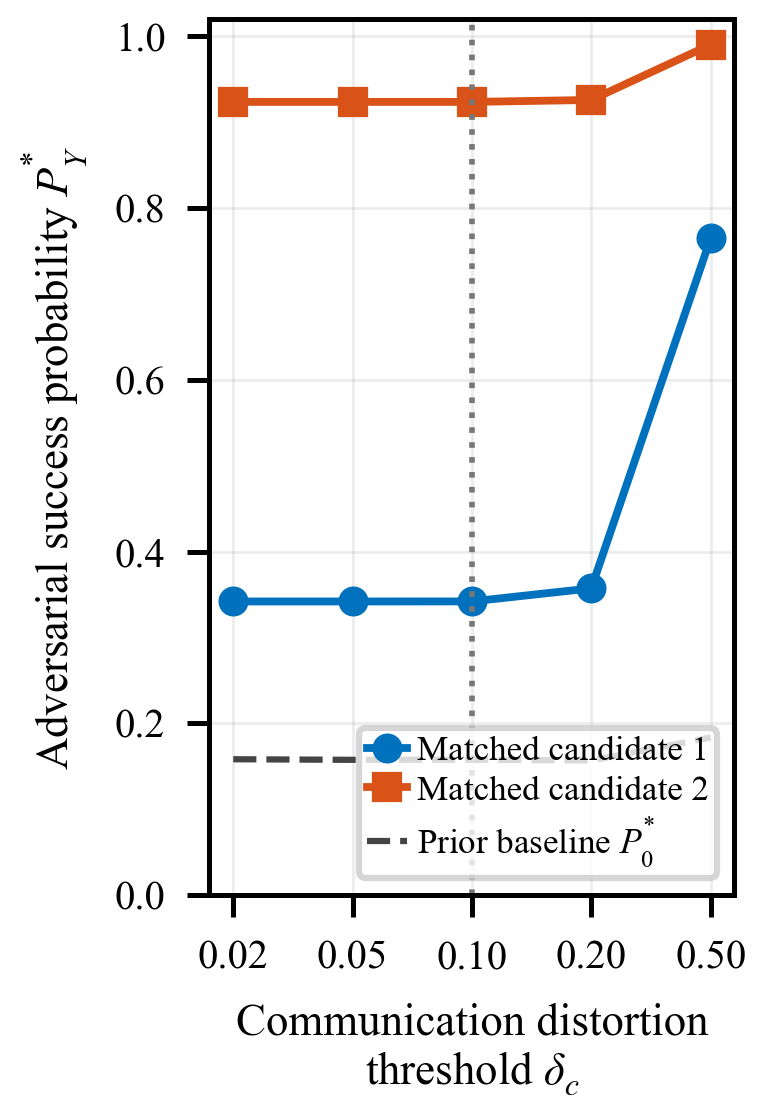}}
    \hfil
    \subfloat[\label{fig:threshold_angle}]{\includegraphics[width=0.485\columnwidth]{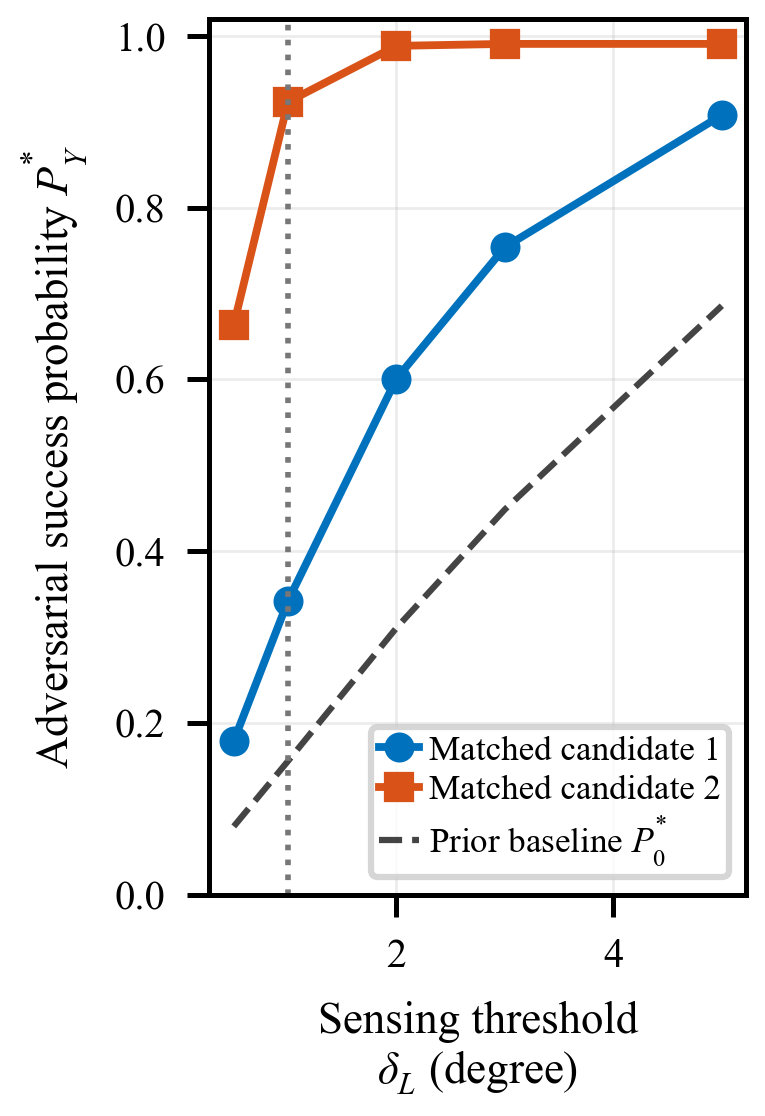}}
    \caption{Threshold dependence of adversarial success probability for two candidates with similar conventional metrics. (a) Communication threshold $\delta_c$ with $\delta_L=1^\circ$. (b) Sensing threshold $\delta_L$ with $\delta_c=0.10$. The dashed curves denote prior adversarial success probability $P_0^\star$, and vertical dotted lines indicate nominal thresholds.}
    \label{fig:threshold_dependence}
\end{figure} 
\noindent which follows the same derivation as \eqref{eq:posterior_characterization}: for each realization of $(\mathbf{Y}_E,V_{\mathcal{K}_E})$, it first maximizes Eve's conditional OR success probability over the symbol and sensing estimates, and then averages the optimized value over $(\mathbf{Y}_E,V_{\mathcal{K}_E})$.

\textcolor{red}{To obtain the corresponding prior value, we apply the same success probability calculation to the prior distribution $q_0(\boldsymbol{\tau}\mid V_{\mathcal{K}_E},\boldsymbol{\omega})$ of $\boldsymbol{\tau}$:}
\begin{equation}
\begin{aligned}
\widetilde{P}_{0}(\zeta)
\triangleq
\E_{V_{\mathcal{K}_E}}
\left[
\widetilde{\mathcal V}_{\rm OR}\big(
q_0(\boldsymbol{\tau}|V_{\mathcal{K}_E},\boldsymbol{\omega})
\big)
\right].
\end{aligned}
\label{eq:P0_tilde}
\end{equation}
It may be the exact prior distribution, an empirical distribution, or a Laplace approximation. Note that $q_0$ and the conditional distributions $q$ may be approximated separately, they may not satisfy $\widetilde P_Y\ge\widetilde P_0$. When $0\le\widetilde P_0\le\widetilde P_Y\le1$, one may also consider the following diagnostic normalized gain
\begin{equation}
\widetilde L_{\rm diag}(\zeta)
\triangleq
\frac{\widetilde P_Y(\zeta)-\widetilde P_0(\zeta)}
{1-\widetilde P_0(\zeta)},\quad \widetilde P_0<1 .
\label{eq:Ldiag_tilde_norm}
\end{equation}
\textcolor{red}{This is a diagnostic approximation of the exact normalized leakage in \eqref{eq:Lop_norm}.}
If the ordering is violated, the pair $(\widetilde P_0,\widetilde P_Y)$ should be reported without normalization.

For fixed symbol and sensing estimates under the Gaussian distribution, the approximated OR success probability is
\begin{equation}
\widetilde P_{\rm OR}=\widetilde P_{\mathrm{s}}+\widetilde P_{\mathrm{c}}-\widetilde P_{\mathrm{s}\cap\mathrm{c}},
\label{eq:OR_decomposition}
\end{equation}
where $\widetilde P_{\mathrm{s}}$, $\widetilde P_{\mathrm{c}}$, and $\widetilde P_{\mathrm{s}\cap\mathrm{c}}$ are the sensing-success probability, communication-success probability, and probability that both events occur for those estimates. The last term must use the joint covariance or joint samples, and not the product of the two individual probabilities. If the estimates equal the corresponding posterior means, define $e_L\triangleq\theta_L-\widehat{\theta}_L$. For a scalar sensing distribution with variance $\sigma_L^2$, the following holds true:
\begin{subequations}\label{eq:task_proxies}
\begin{equation}
\widetilde{P}_{\mathrm{s}}(\delta_L)
\triangleq
\Pr(|e_L|\le\delta_L)
=
1-2Q\!\left(\frac{\delta_L}{\sqrt{\sigma_L^2}}\right).
\label{eq:Q_sense}
\end{equation}
Here $Q(x)=\int_x^\infty (2\pi)^{-1/2}e^{-t^2/2}dt$. If the symbol estimate equals the posterior mean, then $\|\widehat{\mathbf{s}}-\mathbf{s}\|_2^2=\|\widehat{\mathbf{s}}_{\rm R}-\mathbf{s}_{\rm R}\|_2^2$. Let $\boldsymbol{\Sigma}_{s}$ be the block of $\boldsymbol{\Sigma}_{\tau}$ corresponding to $\mathbf{s}_{\rm R}$, let $\lambda_i$ be its eigenvalues, and let $Z_i\sim\mathcal{N}(0,1)$ be independent and identically distributed; then
\begin{equation}
\widetilde{P}_{\mathrm{c}}(\delta_c)
\triangleq
\Pr\!\left(\sum_{i=1}^{2n_C}\lambda_i Z_i^2
\le n_C\delta_c\right),
\label{eq:gchi_comm}
\end{equation}
which is the cumulative distribution function (CDF) of a generalized chi-square random variable. In the isotropic special case
$\boldsymbol{\Sigma}_{s}=\sigma_s^2\mathbf{I}_{2n_C}$, \textcolor{red}{\eqref{eq:gchi_comm}} has the following closed-form expression
\begin{equation}
\widetilde{P}_{\mathrm{c}}(\delta_c)
=
F_{\chi^2_{2n_C}}\!\left(\frac{n_C\delta_c}{\sigma_s^2}\right).
\label{eq:chi_comm}
\end{equation}
\textcolor{red}{Here, $F_{\chi^2_{2n_C}}$ denotes the cumulative distribution function of a chi-square random variable with $2n_C$ degrees of freedom.}
\end{subequations}
The union bound $\min\{1,\widetilde P_{\mathrm{s}}+\widetilde P_{\mathrm{c}}\}$ is only a rough sanity check under the Gaussian approximation and is not a bound on $P_Y^\star$.

\section{Numerical Results}
We considered a BS with $N_{\mathrm{BS},t}=8$ transmit antennas, Eve with $N_E=4$ receive antennas, and one single-antenna downlink user. The intended-user direction, the prior mean of the target direction $\theta_L$, and the Eve-side echo angle $\theta_E$ were $-20^\circ$, $10^\circ$, and $30^\circ$, respectively. We used $\theta_L\sim\mathcal{N}(10^\circ,(5^\circ)^2)$ and, for the single user, $s(\ell)\sim\CN(0,1)$. Unless otherwise specified, we considered $L=100$ snapshots with nominal distortion thresholds $\delta_c=0.10$ and $\delta_L=1^\circ$. For metric validation, we randomly generated 200 feasible $(\mathbf F,\mathbf Z)$ pairs under the same power, communication, and sensing constraints. These candidates were used to compare different security characterizations rather than to optimize the transmit design.

Figure~\ref{fig:snr_leakage} evaluates the proposed adversarial success metric under different Eve SNR conditions. The prior adversarial success probability is $P_0^\star=0.161$ and remains unchanged with Eve SNR. In contrast, the numerically evaluated $P_Y^\star$ increases from $0.179$ at $-10$ dB to $0.513$ at $20$ dB. The normalized leakage in Figure~\ref{fig:snr_leakage}(b) correspondingly increases from $0.021$ to $0.420$, showing that the received observation produces a growing increment in Eve's success probability beyond the prior baseline. Thus, $L_{\mathrm{op}}^{\mathrm{norm}}$ separates the risk due to Eve's side information from the additional risk due to its observation.

To assess whether conventional metrics preserve the ordering of candidate designs by adversarial success probability, we use Spearman's rank correlation coefficient $\rho_{\mathrm S}$ and Kendall's rank correlation coefficient $\tau_{\mathrm K}$ \cite{spearman1904association,kendall1945ties}. Figure~\ref{fig:metric_validation} compares the conventional surrogates with $P_Y^\star$. As observed, the SR surrogate has weak ranking consistency, with $(\rho_{\mathrm S},\tau_{\mathrm K})=(0.181,0.120)$, and the sensing CRB is weaker still, with $(\rho_{\mathrm S},\tau_{\mathrm K})=(0.048,0.032)$. Thus, rate- and estimation-based quantities cannot uniquely determine the finite-record adversarial success probability. \textcolor{red}{In contrast, the conditional Gaussian approximation $\widetilde P_Y$ preserves the ordering of candidate designs and closely follows the numerical Bayesian evaluation, achieving $\rho_{\mathrm S}=0.977$. Its absolute accuracy is quantified by the mean absolute error (MAE):}
\begin{equation}
\mathrm{MAE}\triangleq\frac{1}{N}\sum_{i=1}^{N}\left|\widetilde P_Y^{(i)}-P_{Y,i}^\star\right|,
\label{eq:mae_proxy}
\end{equation}
\textcolor{red}{For the $N=200$ feasible candidate designs, this MAE equals $0.021$, meaning that the average probability error is $0.021$. Therefore, $\widetilde P_Y$ provides an effective low-complexity approximation to the adversarial success probability $P_Y^\star$.}

Figure~\ref{fig:threshold_dependence} shows how the distortion thresholds affect Eve's success probability. Increasing $\delta_c$ or $\delta_L$ enlarges the success region and raises $P_Y^\star$. The two candidates have similar secrecy rates and sensing CRBs, yet their adversarial success probabilities differ across the considered thresholds. \revision{These results motivate the joint design of the precoder $\mathbf{F}$ and AN covariance $\mathbf{Z}$ to minimize operational leakage at thresholds set by the application, subject to power, communication, and sensing requirements.}

\section{Conclusions}
In this letter, ISAC leakage trough Eve’s Bayes probability of meeting either the communication or sensing distortion threshold was quantified. The prior baseline captures the risk due to Eve’s available  knowledge before observation, while the observation-assisted probability captures the additional risk created by the received signal. The proposed conditional Gaussian approximation enables efficient evaluation of the adversarial success probability when the posterior is sufficiently concentrated; numerical Bayes evaluation remains necessary for broad or multimodal posteriors. The numerical results showcased that measures based on the secrecy rate and the CRB do not uniquely determine finite-record adversarial success probability, and that the separation of prior and observation-induced
risk depends on the chosen distortion thresholds.

\end{document}